\newtheorem{definition}{Definition}
\newtheorem{theorem}{Theorem}
\begin{document}

\title{PolyMinHash: Efficient Area-Based MinHashing of Polygons for Approximate Nearest Neighbor Search}

\author{Alima Subedi}
\email{asbmr@mst.edu}
\affiliation{%
  \institution{Missouri S\&T}
  \city{Rolla}
  \state{Missouri}
  \country{USA}
}

\author{Sankalpa Pokharel}
\email{sp5rt@mst.edu}
\affiliation{%
  \institution{Missouri S\&T}
  \city{Rolla}
  \country{USA}}

  \author{Satish Puri\footnotemark[1]}
\email{satish.puri@mst.edu}
\affiliation{%
  \institution{Missouri S\&T}
  \city{Rolla}
  \country{USA}}

\renewcommand{\shortauthors}{Subedi et al.}

\begin{abstract}
Similarity searches are a critical task in data mining. As data sets grow larger, exact nearest neighbor searches quickly become unfeasible, leading to the adoption of approximate nearest neighbor (ANN) searches. ANN has been studied for text data, images, and trajectories. However, there has been little effort to develop ANN systems for polygons in spatial database systems and geographic information systems. We present PolyMinHash, a system for approximate polygon similarity search that adapts MinHashing into a novel 2D polygon-hashing scheme to generate short, similarity-preserving signatures of input polygons. Minhash is generated by counting the number of randomly sampled points needed before the sampled point lands within the polygon's interior area, yielding hash values that preserve area-based Jaccard similarity. 


We present the tradeoff between search accuracy and runtime of our PolyMinHash system. Our hashing mechanism reduces the number of candidates to be processed in the query refinement phase by up to 98\% compared to the number of candidates processed by the Brute Force algorithm.

\end{abstract}


\begin{CCSXML}
<ccs2012>
 <concept>
  <concept_id>00000000.0000000.0000000</concept_id>
  <concept_desc>Do Not Use This Code, Generate the Correct Terms for Your Paper</concept_desc>
  <concept_significance>500</concept_significance>
 </concept>
 <concept>
  <concept_id>00000000.00000000.00000000</concept_id>
  <concept_desc>Do Not Use This Code, Generate the Correct Terms for Your Paper</concept_desc>
  <concept_significance>300</concept_significance>
 </concept>
 <concept>
  <concept_id>00000000.00000000.00000000</concept_id>
  <concept_desc>Do Not Use This Code, Generate the Correct Terms for Your Paper</concept_desc>
  <concept_significance>100</concept_significance>
 </concept>
 <concept>
  <concept_id>00000000.00000000.00000000</concept_id>
  <concept_desc>Do Not Use This Code, Generate the Correct Terms for Your Paper</concept_desc>
  <concept_significance>100</concept_significance>
 </concept>
</ccs2012>
\end{CCSXML}

\ccsdesc[500]{Theory of computation~ Hash functions}
\ccsdesc[300]{Information systems~Nearest-neighbor search}

\keywords{nearest neighbor, MinHashing, locality sensitive hashing, Jaccard distance, polygons, similarity search}

\maketitle

\footnotetext[1]{This work is partially supported by NSF grants \#2344585, \#2344578 and Taylor Geospatial Institute managed by Saint Louis University.}

\section{Introduction}\label{intr}
Similarity search plays a crucial role in data mining.
The ability to tell if two objects are similar and identify whether an object is closest to a query object is vital in some fundamental algorithms in data mining, such as clustering, information retrieval, and recommender systems.
Similarity searches are applied in multiple fields, such as pathology, solar flares~\cite{pillai2013filter}, and Geographic Information Systems (GIS), including geospatial intelligence, for analyzing the geometries (shapes) of
polygons and trajectories. In GIS, the geometries are represented as a list of coordinates (latitude, longitude). An example of shape similarity from the GIS domain is finding a lake that is similar in shape and area to Lake Michigan. In solar physics, to predict solar flares, the query object and the data set comprise polygons that represent solar events~\cite{pillai2013filter}. In digital pathology~\cite{wang2012accelerating}, tissues are represented as polygons for tumor diagnosis, and the Jaccard distance (ratio of intersection to union areas) is used to make similarity comparisons. 

As the size of datasets grows, the ability to perform these searches quickly and efficiently becomes increasingly critical to the usability of the system in practical applications. The Brute-Force approach requires scanning the entire data set to compare it to the query object, which is a linear-time operation per query. For polygonal data, comparing the query polygon with one of the input polygons in the database using the Jaccard similarity $\mathcal{J}_s$ requires expensive computational geometry operations to compute the geometric intersection and union. To find exact nearest neighbors for a query geometry, the geometric Jaccard distance is used, which is derived from geometric similarity (1-$\mathcal{J}_s$). Using this distance metric, any two similar polygons have a small distance between them. An example of ANN search is given in Fig.~\ref{fig:exq}.



Our proposed PolyMinHash system performs a search for the nearest neighbors.

\begin{definition}[Nearest Neighbor Search]
	Given a set $P$ of polygons and the Jaccard distance metric $d$, return a set $S \subset P$ such that for some query $q$, $d(q, s) \leq d(q, p)$ for all $s \in S$ and $p \in P - S$.
\end{definition}


Locality sensitive hashing (LSH) and MinHashing are theoretically robust methods for ANN search~\cite{leskovecMiningMassiveDatasetsa}. There are many LSH and MinHashing algorithms in the literature for a variety of datasets, such as feature vectors~\cite{shrivastavaSimpleEfficientWeighted2016}, text ~\cite{leskovecMiningMassiveDatasetsa}, and images. Polygons represented as variable-length vectors do not lend themselves directly to MinHashing algorithms. To the best of our knowledge, for polygonal shape data, MinHashing algorithms do not exist that directly convert a variable length geometry to area-preserving MinHash signatures that are generally much shorter than original polygons. 

Our proposed MinHashing algorithm is based on a variation of rejection sampling proposed earlier for feature vectors~\cite{shrivastavaSimpleEfficientWeighted2016} and Monte Carlo method for statistical sampling. In order to compute the MinHash signature of a polygon, we generate 2D points as samples and test for inclusion/exclusion w.r.t. its area. We have shown that the number of attempts needed by a random 2D point to land inside a polygon in the rejection sampling process preserves the Jaccard similarity when a pair of polygons are compared for approximate similarity. In this work, we do not consider scale and rotational invariance for polygons.

\begin{figure}
	\centering 
       \includegraphics[width=1.0\linewidth]{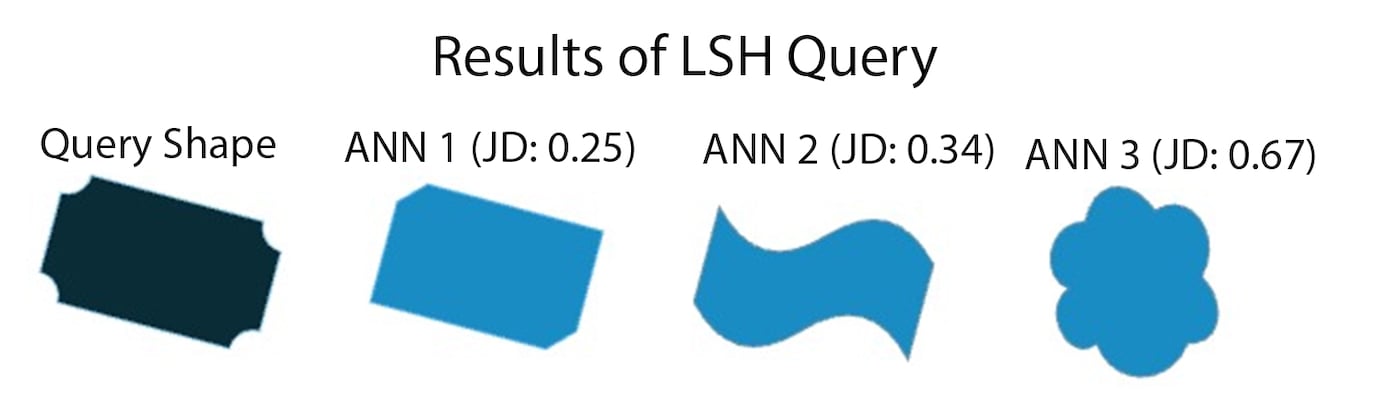}
       \caption{An example of a K-Approximate nearest-neighbor (K-ANN) query and its results for K=3. JD is the Jaccard distance.}
	\label{fig:exq}
\end{figure}

\section{Background and Related Work}\label{prwo}

Many data structures, such as the K-D tree, Voronoi diagram, similarity search tree (SS tree), R-tree, etc., have been designed for distance-based similarity search. Some data structures like the K-D tree and VP-tree (metric-based indexing) support both exact and approximate searches. These trees work effectively for low-dimensional vectors; however, they suffer from the curse of dimensionality when applied to higher-dimensional vectors~\cite{pan2024survey}.

Our proposed work is distinct from other related works along three dimensions - 1) type of datasets used for evaluation, 2) distance metric, and 3) theoretical work vs practical system. 

{\bf Datatype and Distance Metric}: Related work on ANN research is based on a variety of datasets such as trajectories~\cite{astefanoaeiMultiresolutionSketchesLocality2018}, curves~\cite{DBLP:journals/corr/DriemelS17}, images~\cite{grauman2004fast} and text\cite{leskovecMiningMassiveDatasetsa}. The LSH method presented for the trajectories in~\cite{astefanoaeiMultiresolutionSketchesLocality2018} is not suitable for polygons because the trajectories are open, but the polygons are closed. Therefore, the concept of an area of a polygon is not captured by the LSH methods used for trajectories. 

Polygon similarity has been studied in computer graphics and computer vision. First of all, there is a difference in representation, raster vs. vector, which leads to a difference in the design and implementations for the ANN search. Our work considers polygons in vector representation (latitude/longitude list). 

\section{MinHashing for Approximate Similarity}\label{meth}
\begin{figure}[htb]
  \centering
  \includegraphics[width=0.7\linewidth]{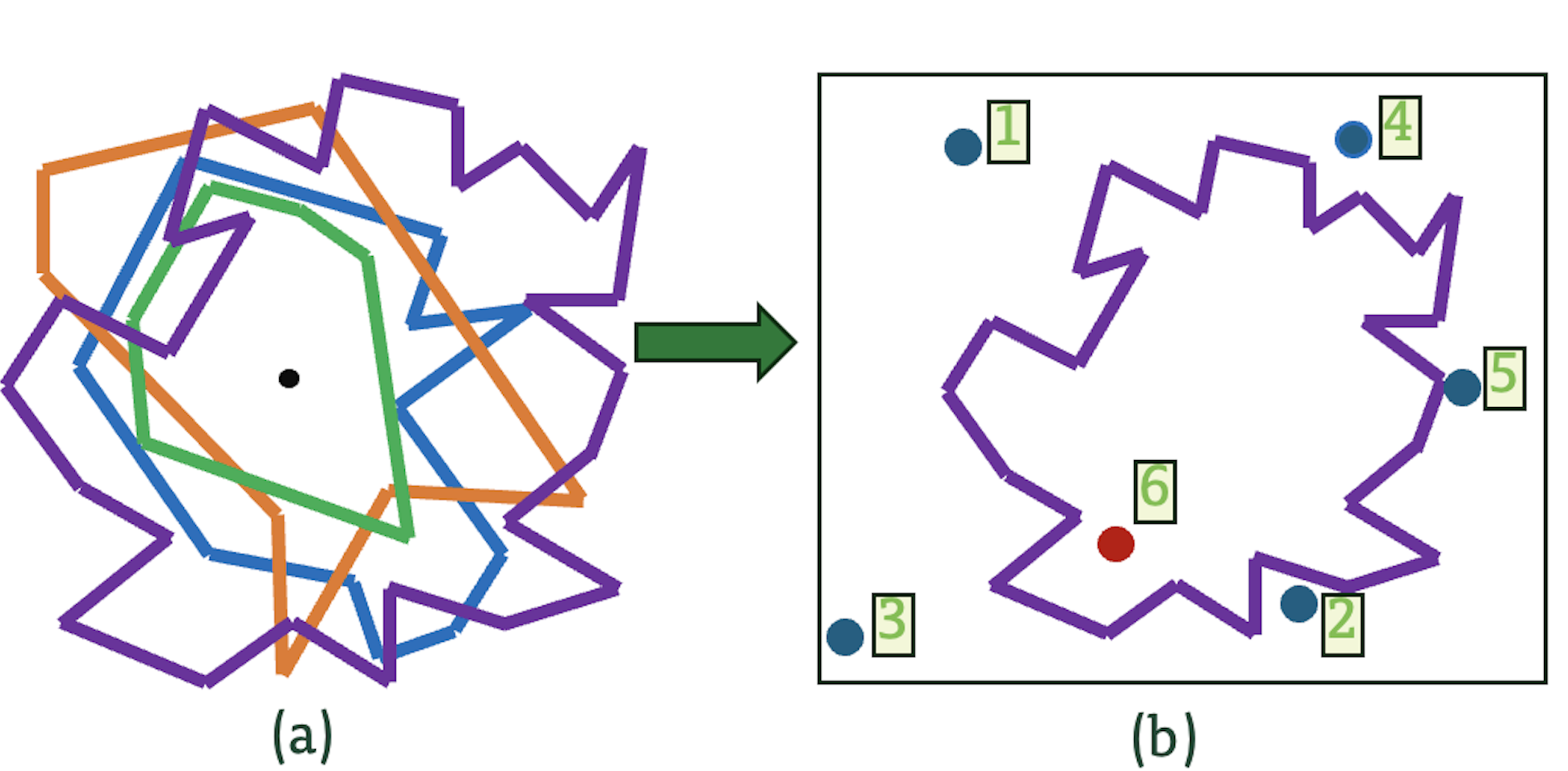}
  \caption{Overview of PolyMinHash system.  
    (a) Center the polygons.  
    (b) Construct the global MBR B and sample 2D points randomly within B for Polygon P. The red dot \#6 is the first point that lands inside the polygon. Therefore, the MinHash(P) is 6. 
   }
  \label{fig:polyhash}
\end{figure}



\subsection{Preprocessing}\label{posk}

The first step in our PolyMinHash system is to create a global MBR that encloses all polygons, which we then use for rejection sampling. Therefore, we apply two important preprocessing steps. Figure ~\ref{fig:polyhash} illustrates the PolyMinHash algorithm.

\textbf{Centering}: We shift each polygon so that its centroid (the geometric center) is positioned at the point $(0, 0)$. This makes it easier to compare shapes that might be located far apart on a real-world map, for example, two parks in completely different regions. 

    
\textbf{Global minimum bounding rectangle}: After centering, we first compute the minimum bounding rectangle (MBR) around each individual polygon. Then, we calculate a global MBR that fully encloses all polygons in the dataset. 

\subsection{PolyMinHash:  MinHashing Scheme for Polygons}\label{Polyhashprop}


Given a polygon $P$ within a global minimum bounding rectangle (MBR) $B$, we first compute the local MBR of $P$,  \([x_{\min}^{P},x_{\max}^{P}]\times[y_{\min}^{P},y_{\max}^{P}]\).
To generate each MinHash value of the signature, we initialize our random number generators using predetermined seeds to ensure consistency across datasets. Fixed seeds make sure that the same sequence of 2D samples will be generated from one run to another. Then we sample a point \((x,y)\) uniformly from the global MBR:
\[
  x \sim \mathrm{Uniform}(x_{\min},x_{\max}),
  \quad
  y \sim \mathrm{Uniform}(y_{\min},y_{\max}).
\]


\begin{definition}[MinHash Function]
	Given a polygon $P$, we define MinHash function $h(P)$ as the number of attempts needed by a random 2D point sampled uniformly over the global MBR to land inside $P$. To generate a MinHash signature of length \(m\), the MinHash function is invoked \(m\) times.
	\label{hs}
\end{definition}




By choosing \(x\) and \(y\) independently and uniformly over their respective intervals, each infinitesimal cell of area \(dx\,dy\) receives the same sampling probability. Once a sample point is generated, several checks are performed sequentially to efficiently filter out points that do not satisfy the following criteria.

\textbf{MBR check:} First, if the sampled point lies outside the local MBR of the polygon, it is immediately rejected without further computation. 

\textbf{Point-in-polygon test:} Otherwise a precise point-in-polygon (PnP) test is performed. If the point lies inside the polygon, the number of attempts taken to land a point inside is recorded as a hash value. 

The pseudocode of the proposed MinHash scheme is shown in Algorithm~\ref{alg:hashing}. This randomized sampling procedure captures the spatial information of a polygonal shape in a concise way.



\begin{algorithm}[H]
	\caption{ PolyMinHash for Polygon Geometry }
	 \begin{algorithmic}[1]
    \Require Polygon vertex array $P$, global bounding box $B$, 
    seed array \texttt{seed[]}, hash length $m$
    \Ensure Hash vector $h[1..m]$
    \State Initialize $h[i]\gets 0$ for all $i$                             
    \State Compute local MBR of $P$                                           \Comment{get $[x^P_{\min},x^P_{\max}]\times[y^P_{\min},y^P_{\max}]$}
    \State Initialize Random Number Generators with each value in \texttt{seed[]}               \Comment{fixed seeds for consistency}

    \For{$i\gets1$ \textbf{to} $m$}
      \State $\mathit{attempts}\gets0$                                      
      \While{true}
        \State $\mathit{attempts}\gets\mathit{attempts}+1$                    
         \State $x\sim\mathrm{Uniform}(x_{\min},x_{\max})$ 

         \State $y\sim\mathrm{Uniform}(y_{\min},y_{\max})$ 
        \If{$x\notin[x^P_{\min},x^P_{\max}]$ \textbf{or} $y\notin[y^P_{\min},y^P_{\max}]$}
          \State \textbf{continue}                                            \Comment{outside polygon’s local MBR}
        \EndIf
        \If{point $(x,y)$ is inside polygon $P$}
          \State $h[i]\gets\mathit{attempts}$                               
          \State \textbf{break}                                             
        \EndIf
      \EndWhile
    \EndFor
    \State \Return $h$
  \end{algorithmic}
    \label{alg:hashing}
\end{algorithm}

\textbf{Correctness of PolyMinHash:} 
Here we want to theoretically explore the question - if two polygons are similar, then what are the chances that they will have the same MinHash code. This is an important property to be satisfied in order to hash similar polygons to the same bucket in a hashmap.

\begin{theorem}\label{thm:collision_probability}
For any two polygons P and Q, the probability of a hash collision equals the Jaccard similarity between the two polygons.

\begin{equation}
\Pr[h(P) = h(Q)] = J(P, Q)
\end{equation}

where \( J(P, Q) \) denotes the Jaccard similarity.

\end{theorem}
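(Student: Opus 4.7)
The plan is to exploit the fact that Algorithm~\ref{alg:hashing} uses a fixed seed array, so when computing $h(P)$ and $h(Q)$ the two procedures draw from the \emph{same} sequence $X_1, X_2, \ldots$ of i.i.d.\ points sampled uniformly on the global MBR $B$. Under this coupling, $h(P) = \min\{k : X_k \in P\}$ and $h(Q) = \min\{k : X_k \in Q\}$. The local-MBR filter in Algorithm~\ref{alg:hashing} is purely an optimization: it only rejects points that cannot lie inside the polygon, and attempts are counted regardless, so this characterization of $h(\cdot)$ is unaffected.

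Next I would decompose the collision event by its value. For the coupled sequence, the event $\{h(P) = h(Q) = k\}$ holds if and only if $X_1, \ldots, X_{k-1} \notin P \cup Q$ and $X_k \in P \cap Q$. Let $A(\cdot)$ denote area and write
\begin{equation*}
  u = \frac{A(P \cup Q)}{A(B)}, \qquad j = \frac{A(P \cap Q)}{A(B)}.
\end{equation*}
Since each $X_i$ is uniform on $B$ and independent, the probability of this joint event is $(1-u)^{k-1} j$.

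Summing over $k \geq 1$ gives a geometric series:
\begin{equation*}
  \Pr[h(P) = h(Q)] = \sum_{k=1}^{\infty} (1-u)^{k-1} j = \frac{j}{u} = \frac{A(P \cap Q)}{A(P \cup Q)} = J(P,Q),
\end{equation*}
which is the desired identity. One should also remark that $u > 0$ whenever $P$ and $Q$ have positive area, so the series converges and the hash terminates almost surely.

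The main obstacle, which is more conceptual than computational, is making precise why the independence of $x$ and $y$ coordinates and the uniformity of sampling on $B$ translate into $\Pr[X_i \in R] = A(R)/A(B)$ for any measurable region $R \subseteq B$ (here applied to $P \cap Q$ and to $B \setminus (P \cup Q)$), and why shared seeds justify treating the two minima as functions of one common sample stream rather than two independent streams. Once that coupling is articulated, the remainder reduces to the standard MinHash collision argument transplanted from discrete sets to continuous area measure.
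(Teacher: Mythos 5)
Your proposal is correct and follows essentially the same route as the paper: both couple $h(P)$ and $h(Q)$ through a single shared stream of uniform samples on $B$ and reduce the collision event to the first sample that lands in $P \cup Q$ falling in $P \cap Q$. The only cosmetic difference is that you compute the probability by explicitly summing the geometric series over the collision index $k$ (which also makes the convergence condition $A(P\cup Q)>0$ visible), whereas the paper argues directly that the three outcomes at the first hit partition $P \cup Q$, so the collision probability is $\mathrm{Area}(P\cap Q)/\mathrm{Area}(P\cup Q)$.
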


\begin{proof}
Intuitively, the first accepted sample point denoted by variable \emph{attempts} (line \#14) represents a randomly sampled point. The probability that $p$ falls within the area of two similar polygons corresponds to their intersection area (area of overlap).




Let \(\{(x_1,y_1),(x_2,y_2),\dots\}\) be the sequence of points independently and uniformly sampled from \(B\). We define the MinHash of polygon $P$ and $Q$ as 
\begin{equation}
  h(P) = \min\{\,j \mid (x_j,y_j)\in P\},
 \;
  h(Q) = \min\{\,j \mid (x_j,y_j)\in Q\}
\end{equation}
and let
\begin{equation}
  j = \min\bigl(h(P),\,h(Q)\bigr).
\end{equation}




We have the following mutually exclusive events in the PolyMinHash sampling process that generate point samples denoted by $p$ = \((x_j,y_j)\):
\begin{enumerate}

    \item \(h(P)=h(Q)=j\) holds if and only if $p$ falls in the intersecting region \(P\cap Q\). In this case, the first sampled point to land in either polygon is shared by both, resulting in a hash collision.

    \item $h(P) > h(Q) = j$ holds if and only if $p$ lands in the area occupied by $Q$ first that does not overlap with $P$, i.e., $p$ is contained in the area given by \(Q\setminus P\). So, the sampling process will continue for $P$ leading to a higher MinHash value for $P$.
    

    \item $h(Q) > h(P) = j$ holds when $p$ lands in the area occupied by $P$ first that does not overlap with $Q$, i.e., $p$ is contained in the area given by \(P\setminus Q\). So, the sampling process will continue for $Q$ leading to a higher MinHash value for $Q$.
\end{enumerate}

In the last two cases, hash collision does not occur which indicates dissimilarity. Since the above-mentioned three cases partition the area denoted by \( P \cup Q \), and each point \( (x_j,y_j) \) is uniformly sampled over \( B \), the probability of a hash collision is:
\[
\Pr[h(P) = h(Q)] = \frac{\text{Area}(P \cap Q)}{\text{Area}(P \cap Q) + \text{Area}(P \setminus Q) + \text{Area}(Q \setminus P)} 
\]
\[
= \frac{\text{Area}(P \cap Q)}{\text{Area}(P \cup Q)} = J(P, Q)
\] 
\end{proof}




Using Theorem 1, we can see that probabilistically similar polygons can have the same MinHash codes, which will lead to hash collisions for similar polygons when we insert the polygons into a hashmap. This helps in clustering similar polygons in a hashmap bucket, which helps in the filtering phase of the ANN search.





\subsubsection{\textbf{Running Time and Variance Analysis:}}

\begin{definition}
Let \( S_{\text{p}} \) denote the effective sparsity of a polygon \( P \) w.r.t. the global MBR:
\begin{equation}\label{spar}
S_{p} = \frac{\text{Area of Polygon}}{\text{Area of Global MBR}}
\end{equation}

\end{definition}

This ratio represents the probability that a point sampled uniformly from the global minimum bounding rectangle (MBR) will fall inside the polygon \( P \). This depends on the area and shape of the polygon in relation to the global MBR.

\begin{theorem} \label{thm:running_time}
The expected number of randomly sampled points required to land inside the polygon \( P \), denoted by $h(P)$, satisfies:
\begin{equation}
\mathbb{E}[h(P)] = \frac{1}{S_{p}}
\label{eq:expected_hits}
\end{equation}
with variance:
\begin{equation}
\text{Var}(h(P)) = \frac{1 - S_{p}}{S_{\text{p}}^2}
\label{eq:variance_hits}
\end{equation}


\end{theorem}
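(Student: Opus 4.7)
The plan is to recognize that $h(P)$ is exactly a geometric random variable and then invoke the standard moment formulas for the geometric distribution. The setup in Algorithm~\ref{alg:hashing} samples points $(x_j,y_j)$ independently and uniformly from the global MBR $B$, so the events $\{(x_j,y_j)\in P\}$ are i.i.d. Bernoulli trials. By Definition (\ref{spar}), the success probability of a single trial is exactly
\begin{equation*}
\Pr[(x_j,y_j)\in P] \;=\; \frac{\mathrm{Area}(P)}{\mathrm{Area}(B)} \;=\; S_p,
\end{equation*}
since the uniform distribution on $B$ assigns mass proportional to area. Here I would briefly note that the MBR pre-check in the algorithm does not alter this probability: points that fall outside the local MBR necessarily fall outside $P$ and are counted as failed attempts, so the per-attempt success probability is still $S_p$.

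Next I would observe that $h(P)$ is defined as the index of the first successful trial, which is the textbook definition of a $\mathrm{Geom}(S_p)$ random variable (on support $\{1,2,3,\ldots\}$). The remaining work is to quote or re-derive the two identities
\begin{equation*}
\mathbb{E}[h(P)] = \sum_{k=1}^{\infty} k\,(1-S_p)^{k-1}S_p = \frac{1}{S_p},
\qquad
\mathrm{Var}(h(P)) = \frac{1-S_p}{S_p^{2}}.
\end{equation*}
For the expectation, I would use the standard tail-sum trick $\mathbb{E}[h(P)] = \sum_{k\ge 0}\Pr[h(P) > k] = \sum_{k\ge 0}(1-S_p)^k$. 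For the variance, the cleanest route is to compute $\mathbb{E}[h(P)(h(P)-1)] = 2(1-S_p)/S_p^{2}$ via differentiating the geometric series twice, then combine with $\mathbb{E}[h(P)^2] = \mathbb{E}[h(P)(h(P)-1)] + \mathbb{E}[h(P)]$ and subtract $\mathbb{E}[h(P)]^2$.

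There is essentially no combinatorial or geometric obstacle here; the only subtle point worth flagging is the independence and identical-distribution of the trials, which follows from re-seeding and from the fact that sampling proceeds until success (so no conditioning on failure changes the marginal of the next draw). A secondary caveat worth mentioning is that the theorem implicitly requires $S_p > 0$, i.e.\ $P$ has positive area inside $B$, which is automatic for any non-degenerate polygon contained in $B$. With these two remarks in place, both identities follow immediately from the geometric-distribution formulas.
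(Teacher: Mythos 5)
Your proof is correct and is exactly the standard argument this theorem rests on: each sample is an i.i.d.\ Bernoulli trial with success probability $S_p$, so $h(P)$ is geometric on $\{1,2,\dots\}$ and the stated mean and variance are the textbook moments (the paper itself gives no explicit proof, clearly intending this same reasoning). Your side remarks --- that the local-MBR pre-check only short-circuits failures without changing the per-attempt success probability, and that $S_p>0$ is needed --- are apt and cover the only subtleties.
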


\section{Polygon Similarity Implementation}\label{impl}
The experiments were carried out on a multicore CPU system with 128 physical cores, operating at 2 GHz. We utilized 100 MPI processes for all experiments for distributed memory parallelization. 

The source code and link for the datasets have been made publicly available on GitHub\footnote{ \url{https://github.com/subedialima/PolyMinHash}}.

\section{Experiments}\label{expe}

\subsection{Data Sets}\label{dase}

Our experiments used a variety of real-world polygon datasets, including \emph{Cemetery}, \emph{Sports}, and \emph{Parks} in WKT (Well-Known Text) format, sourced from the UCR STAR website~\cite{ucrstar}. The characteristics of these datasets are presented in Table ~\ref{tab:s1}.

\begin{table}[h]
\setlength{\tabcolsep}{2pt}
\renewcommand{\arraystretch}{0.4}
	\centering 
	\caption{Attributes of the data set.}
	\begin{tabular}{l c c c c r}
		\toprule 
			\textbf{Name} &  \textbf{\#Geometries} &
            \textbf{\#queries} &
            \textbf{Avg. \#Pts} &  \textbf{FileSize}  \\
		\midrule
        \text{Urban Areas}  & \text{11.8K} & \text{3K} &
        \text{95.40} &
        \text{24.50 MB}   \\
			\text{Cemetery} & \text{149K}  & \text{3K} & 
            \text{9.26} &\text{49.10 MB}  \\
             
              \text{Parks}  & \text{300K} & \text{3K} &
              \text{319.00} &
                \text{3750.00 MB}   \\
                \text{Sports}  & \text{1M} & \text{20K}
             &
             \text{12.00} &
             
             \text {511.00 MB}  \\
               
		\bottomrule
	\end{tabular}
	\label{tab:s1}
\end{table}

\subsection{Experimental evaluation of different approaches}
We are using performance metrics such as Recall@k and runtime taken by PolyMinHash for querying input polygon compared to the ground truth as found by the Brute-Force (BF). Recall@k is the proportion of top-$k$ approximate nearest neighbors returned by the LSH search, which are part of the set of actual nearest top-$k$ neighbors returned by the BF search. Possible values range from 0 to 1, with higher values indicating better performance.

\subsubsection{MinHash Signature Length} 
We analyze the impact of MinHash length on recall and runtime by varying values from 1 to 5. When MinHash length is short, a large number of polygons tend to be mapped to the same MinHash code, resulting in higher recall, but also increased runtime due to more candidate comparison in the refinement phase, as illustrated in Fig.~\ref{fig:runtime}, despite the lower MinHashing time. In contrast, the search process is expedited at the cost of accuracy when the MinHash code is longer, since very few neighbors are assigned to the same \(m\)-length MinHash code.
\begin{figure}[ht]
	\centering 
	\includegraphics[width=0.75\linewidth]{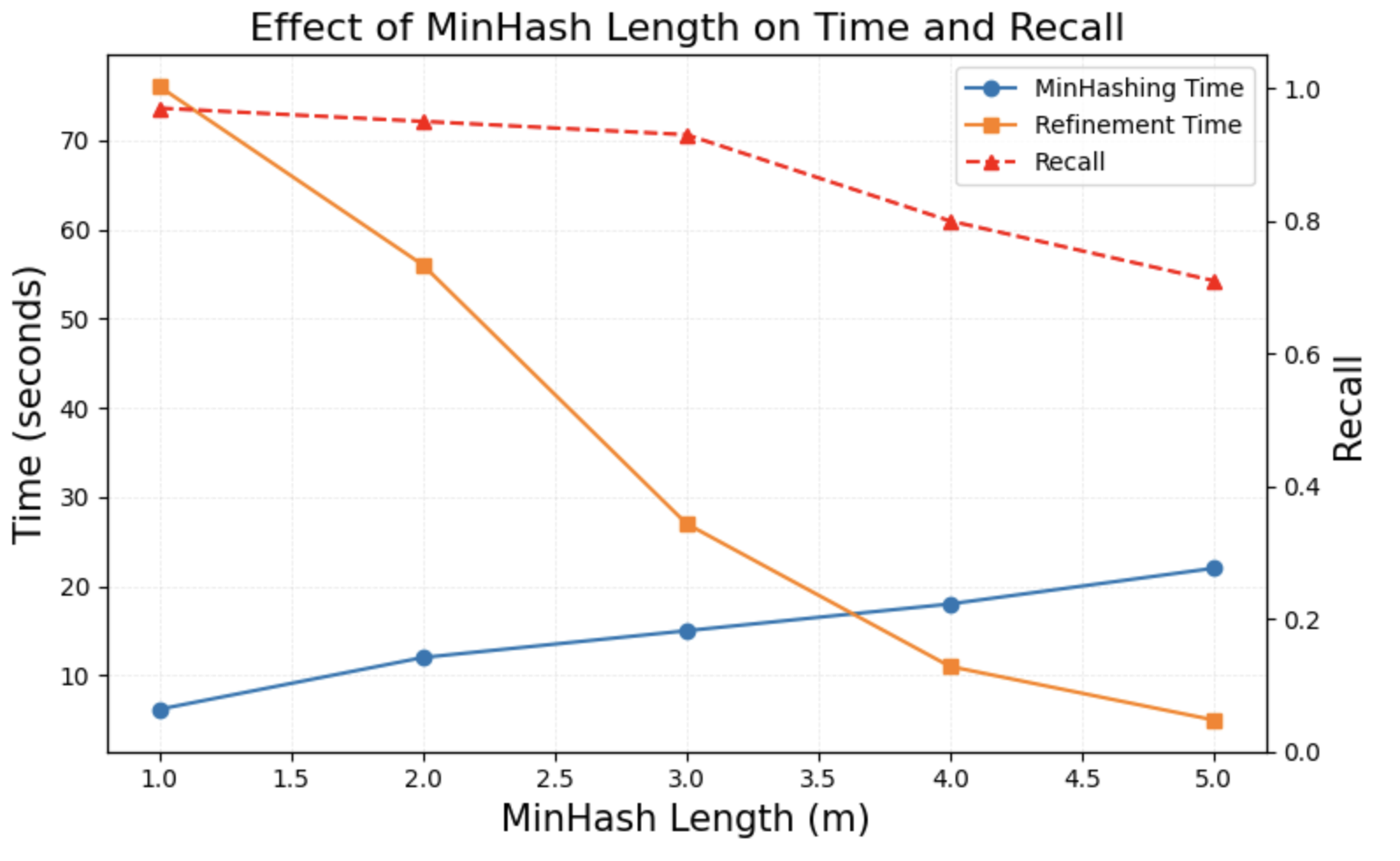}
	
 \caption{Effect of MinHash length on MinHashing time, Refinement time, and Recall for \textit{Cemetery} dataset.}
	\label{fig:runtime}
\end{figure}
Although longer MinHash lengths reduce overall recall, they ensure that only very close neighbors collide in the same MinHash code, so the candidates that are retrieved are highly similar. 



 \begin{table}[ht]
\renewcommand{\arraystretch}{1.0}
    \centering
    \small                          
  \setlength{\tabcolsep}{1pt}

    \textbf{Impact of Different MinHash Lengths on Recall and Runtime in PolyMinHash Across Multiple Datasets. BF: Brute-Force.}

    \caption{
    }

    \resizebox{\columnwidth}{!}{%
    \begin{tabular}{|l|l|l|l|l|l|l|l|l|l|l|}
        \hline
        Dataset &  MinHash  & \multicolumn{3}{c|}{Recall}  & \multicolumn{4}{c|}{Time in Seconds} & Speedup & Pruning (\%) \\
        & Length(m) & $@$10 & $@$50 & $@$100 & \multicolumn{3}{c|}{PolyMinHash} & BF  & BF/Total & \\
        &  &  & & & MinHashing & HashMap Lookup+Refinement & Total &  & & \\
               \hline 
        Cemetery   & 1 & 0.97 & 0.79 & 0.75 & 6.2 & 76.1 & 82.3 &  216.0 & 2.6 & 70 \\
        Cemetery & 3 & 0.93 & 0.88 & 0.81 & 16.1 &  27.5 & 43.6 &216.8  & 4.9 & 89 \\
        Cemetery  & 5 & 0.71 &0.64  & 0.60 & 26.3 & 5.0 & 31.3 &217.2 & 6.9 & 98 \\
        \hline
        Urban Area   & 1 & 0.91 & 0.90 & 0.85 & 3.0 & 13.1 & 16.1 & 32.2 & 2.0 & 70 \\
        Urban Area   & 2 & 0.81 & 0.78 & 0.76 & 4.5 &9.7 & 14.2 & 31.8  & 2.2 & 77 \\
        Urban Area & 3 & 0.76 & 0.70 & 0.66 & 6.2  & 5.7& 11.9 & 31.5  & 2.6 & 88 \\
        \hline 
        Sports & 1 & 0.93 & 0.91 & 0.87 & 29.1 &1696.6 & 1725.7 &4851.8  & 2.8 & 75 \\
        Sports & 3 &0.88  & 0.85 &0.80 &  84.0 &1185.9 & 1269.9& 4851.0  &3.8 & 86\\
        \hline 
        Parks & 1  & 0.93  &  0.92& 0.91  & 81.8 & 1900.0 &1981.8 & 6291.7 & 3.1 & 45 \\
        Parks & 2  &0.84& 0.81 & 0.79 & 240.1 &  1260.9 &  1501.0 &6292.6 & 4.1 & 69\\
        Parks & 3 & 0.69 &0.66  & 0.65 & 409.3 & 734.1 &1143.4 & 6292.7  & 5.5 & 86 \\
        \hline 
    \end{tabular}
  }
    \label{tab:GSp}
\end{table}

\subsection{PolyMinHash Filtering Analysis}

Figure ~\ref{fig:pruning}(a) shows how PolyMinHash’s filtering stage reduces candidate sets. It shows PolyMinHash prunes over 86\% of the dataset with good recall (\(\ge0.69\)) and still prunes around 70\% even at very good recall (\(\ge0.84\)). 
Similarly, Fig. ~\ref{fig:pruning}(b) shows the impact of hash length on pruning. Increasing the number of hash functions \(m\) improves the effectiveness of the filter: larger values of \(m\) produce higher pruning ratios by generating more discriminative signatures. 

\begin{figure}[ht]
	\centering 
	\includegraphics[width=1\linewidth]{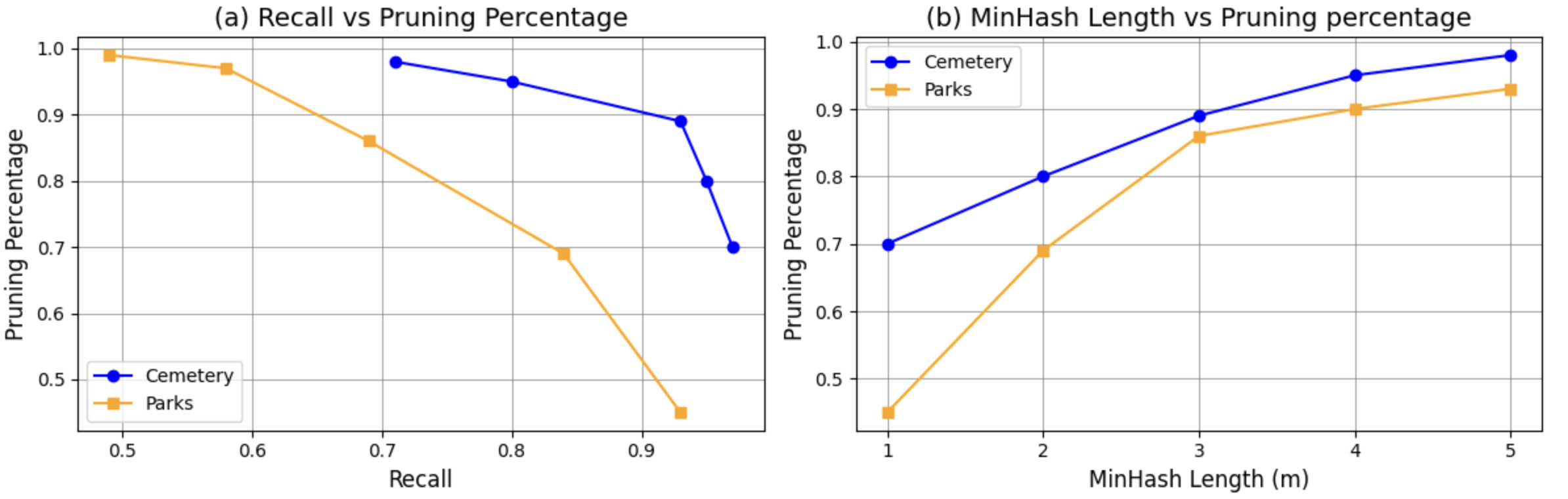}
	
 \caption{(a) Relation between Recall and Pruning. (b) Effect of MinHash Length on Pruning.
}
	\label{fig:pruning}
\end{figure}

The advantage of PolySS system using 2 hashmaps over an exact similarity-based system is demonstrated in Table~\ref{tab:GSp} with different values for Recall@k.


\section{Conclusions}\label{conc}

In this work, we have extended the classical MinHash to operate directly on 2D geometric shapes. Our PolyMinHash framework preserves the Jaccard similarity over polygons, saves memory space in representing polygons concisely, and helps in pruning in ANN search. 





\bibliographystyle{ACM-Reference-Format}
\bibliography{polySS}
\end{document}